\begin{document}
\title{The maximum disjoint paths problem on multi-relations social networks}


\titlerunning{Disjoint Paths on Multi-relations Social Networks}  
%
\author{Bang Ye Wu\inst{1}}
\authorrunning{Bang Ye Wu}   
%
\tocauthor{Bang Ye Wu}
\institute{National Chung Cheng University, ChiaYi, Taiwan 621,
R.O.C.\\
\email{bangye@cs.ccu.edu.tw}}

\maketitle              

\begin{abstract}
Motivated by applications to social network analysis (SNA), we study the problem of finding the maximum number of disjoint uni-color paths in an edge-colored graph. 
We show the NP-hardness and the approximability of the problem, and both approximation and exact algorithms are proposed. Since short paths are much more significant in SNA, we also study the length-bounded version of the problem, in which the lengths of paths are required to be upper bounded by a fixed integer $l$. It is shown that the problem can be solved in polynomial time for $l=3$ and is NP-hard for $l\geq 4$.
We also show that the problem can be approximated with ratio $(l-1)/2+\varepsilon$ in polynomial time for any $\varepsilon >0$. Particularly, for $l=4$, we develop an efficient 2-approximation algorithm.
  
\end{abstract}
{\flushleft\bf Keywords: }algorithm, social network analysis, disjoint paths, approximation algorithm, NP-complete.

\section{Introduction}

A social network is usually modeled by a graph $G=(V,E)$, in which $V$ is the set of actors and $E\subseteq V\times V$ is the binary relation we are interested in. In the terminology of graph theory, $V$ is the node set and $E$ is the edge set. 
The \emph{connectivity}, or \emph{node connectivity}, of two nodes is the minimum number of nodes whose removal separates the two nodes. By Menger's theory, it is equal to the maximum number of disjoint paths between the two nodes and also can be thought of as a simpler form of the maximum flow between them. 
In social network analysis (SNA), connectivity is a basic measurement of information flow between nodes and also used to define cohesion group and centralities \cite{free91,Han05,Was94}. Thus computing the connectivity of two nodes is an important problem in SNA.

When there are more than one kinds of relations, we can model a \emph{multi-relations social network} by a graph with more than one edge sets.
Let $c$ be a positive integer. A $c$-relations social network can be described by $G=(V,\mathcal{E})$, in which $V$ is the set of nodes and 
$\mathcal{E}=\{E_1,E_2,\ldots, E_c\}$ is a collection of $c$ edge sets. 
For $1\leq i\leq c$, $E_i\subseteq V\times V$ represents the $i$-th relation, and we shall say the edges in $E_i$ are of \emph{color} $i$. Note that there may be edges of different colors between one pair of nodes.
For a fixed $c$, a graph is called as a $c$-colors graph if there are at most $c$ colored edge sets, and simply a ``color graph'' if the number of colors is not fixed or need not be specified.  

A path is of uni-color if all the edges of the path are of the same color.
Two paths are \emph{internally disjoint} if they have no common internal node, and a set of paths are internally disjoint if they are mutually internally disjoint. In this paper we shall simply use ``disjoint''.
The decision version of the main problem discussed in this paper is defined as follows.
\begin{quote}
{\sc Problem}: The disjoint paths problem on color graphs (CDP)\\
{\sc Instance}: A color graph $G$, two nodes $s,t\in V$ and a positive integer $p$.\\
{\sc Question}: Are there $p$ disjoint uni-color paths from $s$ to $t$?
\end{quote}
In general, the graph may be directed or undirected but in this paper we only consider undirected graphs.
We shall use the name ``CDP$_{c,p}$'' for the decision problem of which the input is a $c$-colors graph.
The maximization version, denoted by \emph{Max CDP} problem, asks for the maximum number of disjoint uni-color paths between two given nodes, which will be called as their \emph{colored connectivity}.
When there is only one color, the maximum number of disjoint paths, i.e, the traditional connectivity, can be computed in polynomial time by solving the maximum flow problem.
But the colored connectivity problem, to our best knowledge, has not been studied yet. 
A related but different problem studied in the literature is the minimum color path problem  which is motivated by communication reliability and the goal is to find a path or two disjoint paths with minimum number of colors \cite{moh00,yuan05}. Other related problems also includes the minimum color-cost path problem \cite{has07} and properly colored path problems, seeing \cite{gut09} for example.

The motivation of studying the colored connectivity is natural. Most of the researches in SNA consider only single relation. But in practical there are more than one kinds of relations. The Max CDP problem arises if the information flow or the influence spread only along relations of the same kind. Computer virus spreading is an example. One virus usually spreads only along one or several particular computer softwares. 
Conversations among people is another example. People usually talk different topics with the ones of different relations.
Disjoint paths also play an important role in data communication when security or traffic congestion is concerned. Thus the scenario of the Max CDP problem may also occur if different types of links between nodes are considered, either due to different media or different protocols. 

The results and the organization of this paper are as follows.
In Section 2, first we show that the CDP problem is NP-complete even for 2-colors graphs and that the Max CDP problem cannot be approximated with ratio less than two, unless NP=P. And then we give an $O(mn)$-time $c$-approximation algorithm for $c$-colors graphs. Throughout this paper, $m$ and $n$ denote the numbers of edges and nodes of the input graph $G$, respectively. An extreme example is given to show the tightness of the ratio.
Also we give an $O((m+n)c^{n})$ time exact algorithm for the problem.
Since, in social network analysis, short paths are considered much more significant than long paths, we also study the length-bounded version of the Max CDP problem, namely $l$-LCDP, in which the lengths of solution paths are required to be upper bounded by a fixed integer $l$.
In Section 3, we show that the $l$-LCDP problem can be solved by graph matching for $l=3$ and is NP-hard for $l\geq 4$. We also show that, 
for any fixed $\varepsilon>0$, the $l$-LCDP problem can be approximated with ratio $(l-1)/2+\varepsilon$ in polynomial time. Particularly, for a $c$-colors graph, we give an efficient 2-approximation for $l=4$ with time complexity $O(p^2(c^2n+cm))$, in which $p$ is the number of paths found by the algorithm. In most of the applications, it is a linear time algorithm. 

\section{Complexity and approximability}

In this section, we show the complexity and the approximability of the CDP problem.
First, in Section 2.1, we show that the problem is NP-complete, and the proof also implies that the Max CDP problem is NP-hard and cannot be approximated with ratio less than two, unless NP=P. In Section 2.2, we give a simple $c$-approximation algorithm for $c$-colors graphs and an extreme example to show the sharpness of the ratio. In Section~\ref{sec:exact} we propose an algorithm for finding the exact solution.
For a $c$-colors graph $G=(V,\{E_1,E_2,\ldots E_c\})$, we shall denote $(V, E_i)$ by $G_i$.

\subsection{NP-completeness}
To show the NP-hardness of the CDP problem, we introduce the following similar problem, named MCDP$_c$ in short. 
\begin{quote}
{\sc Problem}: The multi-pairs disjoint paths problem on $c$-colors graphs \\
{\sc Instance}: A $c$-colors graph $G$, $c$ pairs $(s_i,t_i)$, $1\leq i\leq c$, of nodes.\\
{\sc Question}: Is there a color-$i$ path $P_i$ from $s_i$ to $t_i$ for each $1\leq i\leq c$ such that $P_i$ and $P_j$ are internally disjoint for all $i$ and $j$?
\end{quote}

The reduction from the MCDP$_2$ problem to the CDP$_{2,2}$ problem is quite straightforward.
We first assume that all nodes in the given pairs are distinct, and the other case will be explained later.
For an instance of the MCDP$_2$ problem, we construct a graph $G'$ from $G$ by adding two new nodes $s$ and $t$, as well as four edges $(s,s_1)$, $(s,s_2)$, $(t,t_1)$, and $(t,t_2)$.
The edges $(s,s_1)$ and $(t,t_1)$ have color one and the other two new edges have color two.
Apparently there exist two disjoint uni-color $st$-paths in $G'$ if and only if the answer of the MCDP$_2$ problem is also ``yes''.   
Therefore if the MCDP$_2$ problem is NP-complete, so is the CDP$_{2,2}$ problem.
In the case that $s_1=s_2$, we can add a \emph{duplicate} $s_1'$ of $s_1$ such that $s_1'$ has the same neighbors as $s_1$, and the edges incident to $s$ are $(s,s_1)$ and $(s,s_1')$ instead. Other cases that any two nodes in the given pairs are not distinct can also be handled similarly. 
We shall show the NP-completeness of the MCDP$_2$ problem by transformation from the SAT problem. We remind that the MCDP problem on 1-color graphs is polynomial-time solvable when the number of pairs is fixed \cite{rob95,sey80,shi80}.

\begin{figure}[t]
\begin{center}
\includegraphics[scale=0.8]{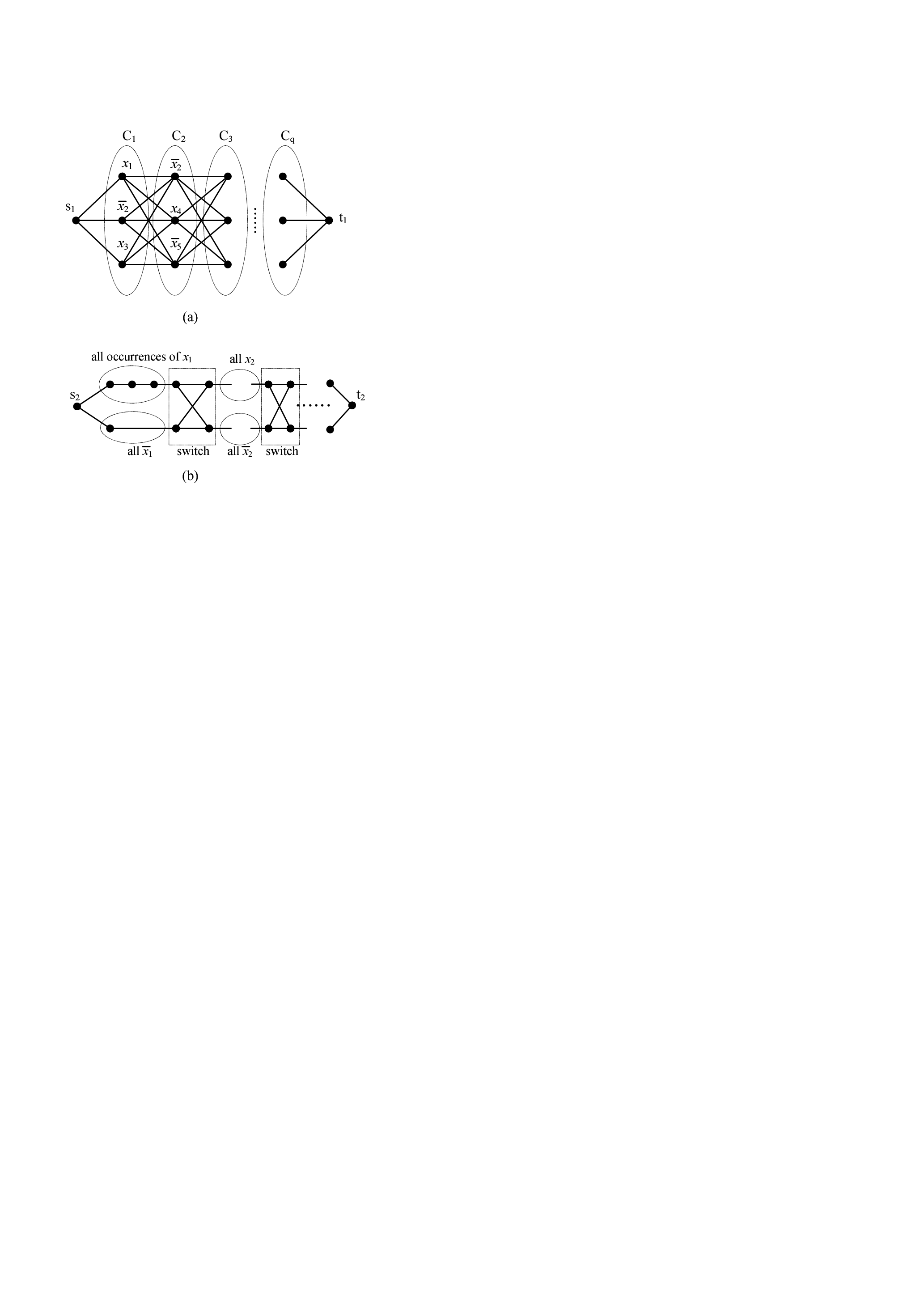}
\caption{Transformation from SAT problem to MCDP$_2$ problem: (a). edges of color 1; and
(b). edges of color 2.}
\label{npcf1}
\end{center}
\vspace{-20pt}
\end{figure}
Let $C_i$, $1\leq i\leq q$ be
the clauses of the SAT problem and $x_i$, $1\leq i\leq r$, the
variables.
We construct a 2-colors graph $G=(V,\{E_1,E_2\})$ as follows.
The node set mainly consists of $\{s_1,t_1,s_2,t_2\}\cup\{x_i^j|x_i\in C_j, 1\leq i\leq
r, 1\leq j\leq q\}\cup \{\bar{x}_i^j|\bar{x}_i\in C_j, 1\leq i\leq
r, 1\leq j\leq q\}$, and some other nodes for some ``switches'' (explained later). 
The edges of color 1 and 2 are depicted in Figure~\ref{npcf1}.

$G_1$ is an $(q+2)$-stages graph, in which the $i$-th stage corresponding to a clause $C_i$ for $1\leq i\leq q$, and the $0$-th and the $(q+1)$-th stages are $s_1$ and $t_1$, respectively. 
Two consecutive stages are connected as a complete bipartite graph.
Note that, for simplicity, the super scripts of nodes are not shown in the figure. 
Different nodes are used to represent a same literal $x_i$ or $\bar{x}_i$ appearing in different clauses.

For color 2, all occurrences of a same literal, i.e, $x_i^j$ or $\bar{x}_i^j$ for all $j$, are connected to form a path, and the four paths of two consecutive variables are connected by a $2\times 2$ switch as shown in the figure.

\begin{lemma}\label{np1}
If and only if there is a truth assignment satisfying all the clauses, there are
an $s_1t_1$-path in $G_1$ and an $s_2t_2$-path in $G_2$, which are disjoint.
\end{lemma}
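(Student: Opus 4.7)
\medskip

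\noindent\textbf{Proof proposal.} The plan is to establish the equivalence by separately characterizing the $s_1 t_1$-paths in $G_1$ and the $s_2 t_2$-paths in $G_2$ as combinatorial objects, then matching them up. Because $G_1$ is a $(q+2)$-stage layered graph whose consecutive stages are joined by complete bipartite edges (with stage $0$ being $s_1$, stage $q+1$ being $t_1$, and stage $j$ consisting of the literal nodes appearing in $C_j$ for $1\le j\le q$), every simple $s_1 t_1$-path visits exactly one literal node per stage, so such a path corresponds to a choice $\phi(j)=\ell_j$ of one literal from each clause. The structure of $G_2$ is the key observation on the other side: the $2r$ literal-paths (one for each of $x_i,\bar x_i$, $1\le i\le r$) are chained by $2\times2$ switches, and a direct inspection of the switch gadget shows that any $s_2 t_2$-path selects, for each variable, exactly one of its two literal-paths and traverses \emph{all} of that path's literal nodes; so such a path is encoded by $\sigma:\{1,\dots,r\}\to\{+,-\}$, where $\sigma(i)=+$ means ``$P_2$ passes through every $x_i^j$''.

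For the ``if'' direction, given a satisfying assignment $\tau$, I would take $\phi(j)$ to be any literal of $C_j$ that is true under $\tau$ and set $\sigma(i)=+$ iff $\tau(x_i)$ is false, so that $P_2$ traverses, for each variable, precisely the literal-path corresponding to its \emph{false} literal. Since every $\ell_j$ is true under $\tau$ by construction, the literal-path containing $\ell_j^j$ is never the one selected by $\sigma$, hence $\ell_j^j\notin P_2$. The internal switch nodes of $G_2$ do not appear in $G_1$ at all, so no further conflicts can arise and $P_1,P_2$ are disjoint.

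For the ``only if'' direction, given disjoint $P_1,P_2$ I would extract $\phi$ and $\sigma$ as above and define $\tau(x_i):=$ true iff $\sigma(i)=-$. Disjointness forces $\ell_j^j\notin P_2$, so the literal-path of $\ell_j$ is \emph{not} the one selected by $\sigma$; unpacking the definition of $\tau$, this is exactly the condition that $\ell_j$ is true under $\tau$, so every $C_j$ is satisfied. The main obstacle will be the structural claim about $G_2$ in the first paragraph: one has to verify from the switch gadget that no $s_2 t_2$-path can ``cross over'' inside a switch and use a mixture of the two literal-paths for a single variable, and dually that both possible per-variable selections do extend to a full $s_2 t_2$-path---a case-by-case verification on the gadget in Figure~\ref{npcf1}, whose geometry the figure pins down unambiguously.
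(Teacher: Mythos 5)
Your proposal is correct and follows essentially the same route as the paper: characterize the $s_2t_2$-path as selecting, for each variable, exactly one of the two literal-paths (the paper states this as the observation that any $s_2t_2$-path passes through all occurrences of either $x_i$ or $\bar{x}_i$), let the $G_2$ path cover the false literals, and read off the truth assignment from which literal-path is avoided. Your version merely formalizes the encoding ($\phi$, $\sigma$) and flags the switch-gadget verification that the paper leaves implicit.
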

\begin{proof}
If the instance of SAT problem is satisfiable, we may have an $s_2t_2$-path in $G_2$ passing through all literals which are assigned False. That is, for each $i$, the path passes through $x_i$ if $x_i=$False; and through $\bar{x}_i$ otherwise. Since this truth assignment satisfies all clauses, each clause has a literal assigned True, and therefore there is a path from $s_1$ to $t_1$ in $G_1$.

Conversely, suppose that there are two such disjoint paths. 
Since there is an $s_1t_1$-path in $G_1$, each stage has a node not used by the path in $G_2$.
We observe that, in $G_2$, any $s_2t_2$-path passes through all occurrences of either $x_i$ or $\bar{x}_i$ for every $i$. Therefore if we assign $x_i$ True if it is not passed by the path in $G_2$ and assign False otherwise, every clause has a literal assigned True and the instance is satisfiable. 
\qed\end{proof}
Since the MCDP$_2$ and the CDP problems are apparently in NP, we obtain the following theorem. 
\begin{theorem}
The MCDP$_2$ problem is NP-complete. The CDP problem is NP-complete even for determining if there exist 2 paths in a 2-colors graph.
\end{theorem}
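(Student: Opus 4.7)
The plan is to assemble the theorem from the three ingredients already laid out in the excerpt: (i) NP membership, (ii) the SAT-to-MCDP$_2$ construction together with Lemma~\ref{np1}, and (iii) the straightforward MCDP$_2$-to-CDP$_{2,2}$ reduction given immediately before the figure.

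First I would verify that both problems lie in NP. Given a candidate set of paths, a verifier can in polynomial time check that each $P_i$ uses edges of the correct single color, that its endpoints are the prescribed $(s_i,t_i)$ pair (or $(s,t)$ in the CDP formulation), and that the paths are pairwise internally disjoint by scanning their internal node sets. So both MCDP$_2$ and CDP$_{2,2}$ are in NP.

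For NP-hardness of MCDP$_2$, I would formalize the reduction from SAT already sketched. Given clauses $C_1,\ldots,C_q$ over variables $x_1,\ldots,x_r$, the 2-colors graph $G$ with $E_1$ the $(q+2)$-stage complete-bipartite gadget and $E_2$ the literal-paths joined by $2\times 2$ switches is constructible in time polynomial in $q+r$, and it has a fixed pair count ($c=2$). Correctness is exactly Lemma~\ref{np1}: a satisfying assignment yields a disjoint color-$1$ $s_1t_1$-path (via one true literal per clause) and a color-$2$ $s_2t_2$-path (through all false literals), and conversely any such pair of disjoint paths forces each variable to be traversed on only one of its two literal-tracks in $G_2$, which induces a satisfying assignment. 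Hence SAT reduces to MCDP$_2$.

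For NP-hardness of CDP$_{2,2}$, I would apply the reduction already described just before the figure: adjoin $s,t$ with color-$1$ edges $(s,s_1),(t,t_1)$ and color-$2$ edges $(s,s_2),(t,t_2)$. Any two disjoint uni-color $st$-paths must use these four pendant edges in the only possible way, forcing one color-$1$ path through $s_1,\ldots,t_1$ and one color-$2$ path through $s_2,\ldots,t_2$, and their internal disjointness is inherited. In the degenerate cases where some of $s_1,s_2,t_1,t_2$ coincide, I would apply the duplicate-node trick indicated in the text (introduce a twin with the same neighborhood for the repeated endpoint) and observe that this preserves the existence of internally disjoint path pairs. Combining the two reductions gives SAT $\le_p$ MCDP$_2$ $\le_p$ CDP$_{2,2}$, so both problems are NP-complete.

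The main obstacle, conceptually, is the correctness of the SAT gadget, but that burden is already discharged by Lemma~\ref{np1}; the remaining work in the theorem itself is essentially verifying NP membership and chaining the two polynomial-time reductions, together with a careful (but routine) check that the duplicate-node fix in the MCDP$_2$-to-CDP$_{2,2}$ step does not introduce spurious solutions.
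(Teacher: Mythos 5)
Your proposal is correct and follows exactly the paper's route: the paper likewise establishes the theorem by combining NP membership, the SAT-to-MCDP$_2$ gadget whose correctness is Lemma~1, and the MCDP$_2$-to-CDP$_{2,2}$ reduction (with the duplicate-node fix for coinciding terminals) described just before the figure. You have merely written out in full the chaining that the paper leaves implicit in its one-line proof.
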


\begin{corollary}
The Max CDP problem is NP-hard and cannot be approximated in polynomial time with ratio $2-\varepsilon$ for any $\varepsilon>0$, unless NP=P.
\end{corollary}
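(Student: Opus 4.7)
The plan is to leverage the reduction already used to prove Theorem 1, extracting a gap between two completable instance values. Specifically, the reduction from SAT produces a 2-colors instance $G'$ together with designated terminals $s,t$, and by Lemma~\ref{np1} the answer to the CDP$_{2,2}$ question ``are there 2 disjoint uni-color $st$-paths?'' is YES iff the SAT instance is satisfiable. I would first observe that this reduction is in fact a \emph{gap-producing} reduction for the Max CDP optimization version.

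The key step is to argue that in $G'$ the optimum is always at least $1$, so that the two possible cases are precisely $\text{OPT}\geq 2$ (when SAT is satisfiable) and $\text{OPT}=1$ (when it is not). For the lower bound of $1$, I would note that $G_1$ contains an $s_1t_1$-path (every pair of consecutive stages is a complete bipartite graph), hence $s,s_1,\ldots,t_1,t$ is a uni-color $st$-path of color $1$ in $G'$; so there is always at least one uni-color $st$-path. Combined with Lemma~\ref{np1}, if SAT is unsatisfiable then no two disjoint uni-color $st$-paths exist in $G'$, so $\text{OPT}=1$ exactly.

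With this gap in hand, the inapproximability follows by a standard argument. Suppose, for contradiction, that for some fixed $\varepsilon>0$ there is a polynomial-time algorithm $A$ achieving approximation ratio $2-\varepsilon$ for Max CDP. Running $A$ on $G'$: if SAT is satisfiable then $\text{OPT}\geq 2$, so $A$ returns a value at least $2/(2-\varepsilon)>1$, hence at least $2$ (integrality); if SAT is unsatisfiable then $\text{OPT}=1$ and $A$ returns at most $1$. Thus the returned value determines satisfiability in polynomial time, forcing NP $=$ P. Combined with the NP-hardness of the decision version already established, this yields the corollary.

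I do not expect any real obstacle here, since the heavy lifting is done by Theorem 1 and Lemma~\ref{np1}; the only point that requires care is the trivial existence of a single uni-color $st$-path in $G'$, which guarantees that the YES/NO gap is exactly the factor $2$ required for the $(2-\varepsilon)$-inapproximability.
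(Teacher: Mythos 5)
Your proposal is correct and follows essentially the same route as the paper: the paper's one-line proof is exactly the gap argument that distinguishing OPT $=1$ from OPT $\geq 2$ is NP-hard, which rules out any $(2-\varepsilon)$-approximation. You merely make explicit the detail the paper leaves implicit, namely that the reduced instance always admits at least one uni-color $st$-path so the optimum is never $0$.
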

\begin{proof}
Since determining one or two paths is NP-complete, it is impossible to approximate the optimal  with ratio less than two in polynomial time, unless NP=P.
\qed\end{proof}

\subsection{An approximation algorithm}\label{sec:app}
By $\kappa_i(s,t)$ we denote the connectivity of $s$ and $t$ in graph $G_i$, i.e., the maximum number of disjoint paths between them. When the subscript is omitted, $\kappa(s,t)$ denotes the maximum number of disjoint paths of uni-color.  
We show the following greedy algorithm is a $c$-approximation algorithm for $c$-colors graphs.
\begin{quote}
For each color $i$, find $\kappa_i(s,t)$.
Select the color $i$ with maximum $\kappa_i(s,t)$ and put these paths into solution.
Remove all internal nodes of these paths, and then repeat the previous step until no path remains.	
\end{quote}

\begin{theorem}
The Max CDP problem can be $c$-approximated in $O(mn)$ time for $c$-colors graphs.
\end{theorem}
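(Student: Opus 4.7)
My plan is to prove the claim in two parts: the approximation ratio and the running time. The ratio argument carries the conceptual weight, while the running time is a routine accounting of max-flow computations.

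For the ratio, I would let $\mathrm{OPT}$ denote an optimal collection of disjoint uni-color $s$-$t$ paths, and split it by color as $|\mathrm{OPT}|=\sum_{i=1}^{c}\mathrm{OPT}_i$, where $\mathrm{OPT}_i$ is the number of color-$i$ paths in $\mathrm{OPT}$. Because those $\mathrm{OPT}_i$ paths are themselves internally disjoint $s$-$t$ paths lying inside $G_i$, we immediately have $\kappa_i(s,t)\ge \mathrm{OPT}_i$ for every $i$. Averaging then yields $\max_i \kappa_i(s,t)\ge \max_i \mathrm{OPT}_i\ge |\mathrm{OPT}|/c$. So the paths chosen in the very first iteration alone already number at least $|\mathrm{OPT}|/c$; subsequent iterations can only enlarge the output, and this establishes the $c$-approximation.

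For the running time, each $\kappa_i(s,t)$ reduces (via the standard node-splitting trick) to unit-capacity $s$-$t$ max flow in $G_i$, which Ford-Fulkerson solves in $O(m_i n)$ time since the flow value is at most $n$ and each augmenting path costs $O(m_i)$. Summed over the $c$ colors, a single pass over all colors costs $\sum_i O(m_i n)=O(mn)$. The main obstacle will be aggregating the work across iterations without blowing up to $O(mn^2)$. I would handle this by maintaining the residual networks incrementally rather than recomputing from scratch: after deleting the internal nodes of the selected paths, the only work required in the other color classes is to cancel the units of flow actually routed through the deleted nodes, and no new augmentations are ever needed in the selected color since its flow has been fully extracted. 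Because the solution paths are all mutually internally disjoint, at most $n-2$ paths are ever produced in total, so the combined augment/cancel work across all iterations and all colors stays $O(mn)$, yielding the stated overall bound.
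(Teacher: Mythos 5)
Your proposal is correct and follows essentially the same route as the paper: the ratio comes from $\kappa(s,t)\leq c\cdot\max_i\kappa_i(s,t)$ (your color-wise decomposition of OPT is just the explicit version of this) together with the fact that the first greedy iteration already yields $\max_i\kappa_i(s,t)$ paths, and the running time comes from unit-capacity max flow costing $O(\kappa_i(s,t)|E_i|)$ per color. Your incremental residual-network maintenance is an unnecessary (though valid) precaution --- even recomputing from scratch, iteration $j$ costs at most $\bigl(\max_i\kappa_i^{(j)}\bigr)\cdot m$ and these maxima sum to at most $\kappa(s,t)<n$ over all iterations, which is how the paper reaches $O(\kappa(s,t)\,m)=O(mn)$.
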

\begin{proof}
Apparently the optimal solution $\kappa(s,t)\leq c\cdot\max_i \kappa_i(s,t)$. The approximation ratio follows from that the number of paths found by the algorithm is at least $\max_i \kappa_i(s,t)$. The value $\kappa_i(s,t)$, i.e., connectivity in a uni-color graph, can be found by solving a maximum flow problem \cite[p. 212]{hugh90} and therefore takes $O(\kappa_i(s,t)|E_i|)$ time  \cite{cor01,iflow}.
In total the algorithm takes $O(\kappa(s,t)m)$ time, or $O(mn)$ time since $\kappa(s,t)<n$. 
\qed\end{proof}

\begin{figure}[t]
\begin{center}
\includegraphics[scale=0.85]{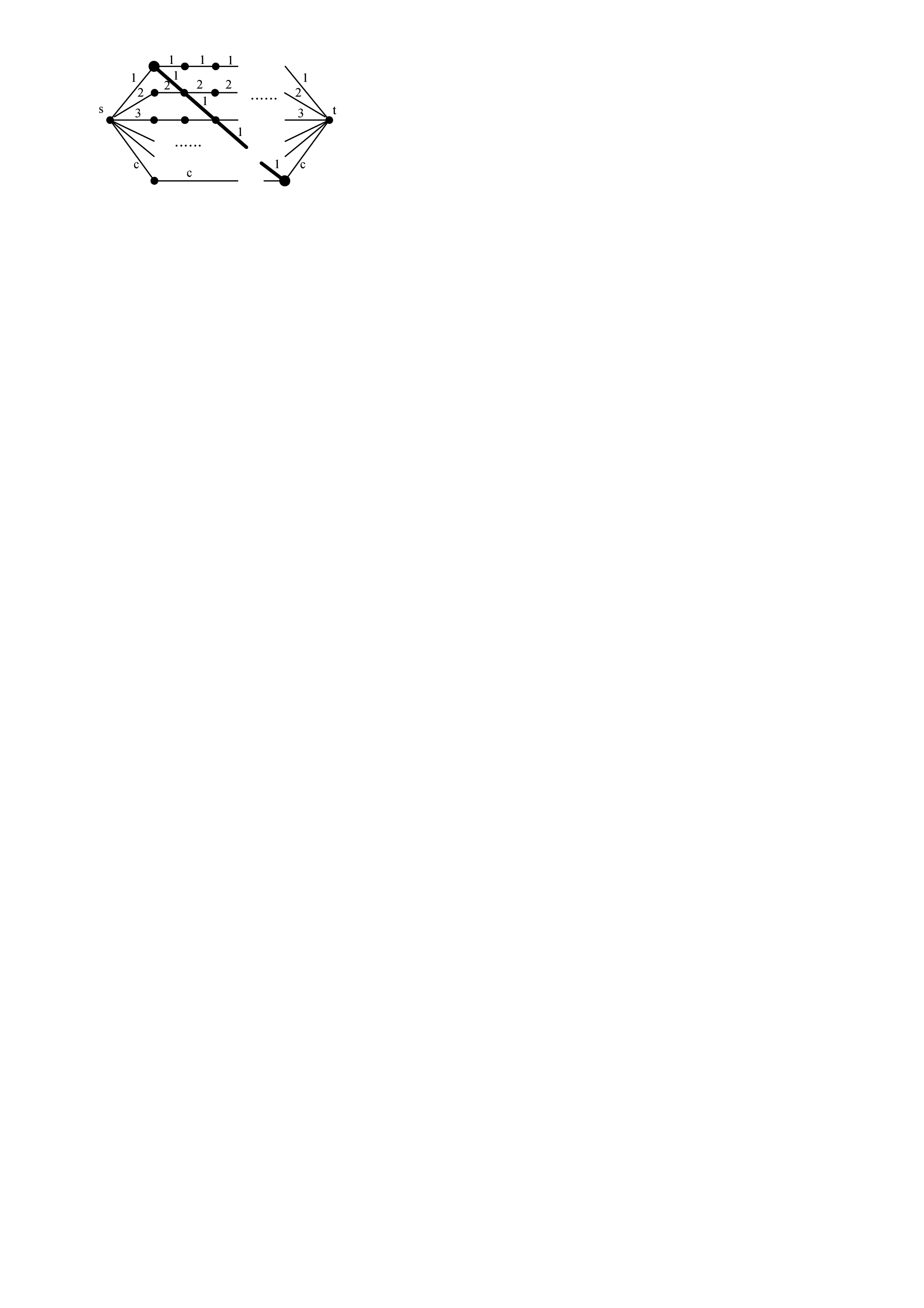}
\caption{A tight example for the $c$-approximation algorithm, in which edges are labeled by their color number.}
\label{fig:tight}
\end{center}
\vspace{-20pt}
\end{figure}
Figure~\ref{fig:tight} illustrates a tight example of the $c$-approximation algorithm. The optimal solution contains $c$ disjoint paths (the horizontal ones), one for each color. But if we choose the bold path of color 1 at the first iteration, the algorithm will find only one path. 

\subsection{An exact algorithm}\label{sec:exact}

First, for any color $i$, if $(s,t)\in E_i$, this path of single edge must be in the optimal solution, and we can put it into the solution and remove this edge. Therefore, in the remaining paragraphs of this paper, we assume $(s,t)\notin E_i$ for any $i$. 
For a $c$-colors graph $G$, define a node coloring $\delta: V-\{s,t\}\mapsto \{1..c\}$.
Two nodes are said to be assigned the same color $i$ if $\delta(u)=\delta(v)=i$.
For the convenience, nodes $s$ and $t$ are thought of having the same color as any node in any coloring.
Let $E_i[\delta]$, $1\leq i\leq c$, denote the subset of $E_i$ in which the two endpoints are assigned the same color $i$ by $\delta$.
Let $E[\delta]=\bigcup_i E_i[\delta]$ and $G[\delta]$ be the uni-color graph induced by the edge set $E[\delta]$.
Suppose that $\mathcal{P}$ is an optimal solution of the Max CDP problem. Let $\delta^*$ be a node coloring such that $\delta^*(v)=i$ if $v$ is on a path of color $i$ in $\mathcal{P}$; and $\delta^*(v)$ is arbitrary otherwise.

We can observe that any path in $\mathcal{P}$ must also be a path in $G[\delta^*]$ and any path in $G[\delta^*]$ corresponds to a uni-color path in $G$. 
Thus, $|\mathcal{P}|$ equals the $st$-connectivities on $G[\delta^*]$ and can be computed in $O(mn)$ time.
If we individually solve the maximum flow problems for all colorings, the total time complexity will be $O(mnc^n)$. By the following observations, the complexity can be reduced to $O((m+n)c^n)$.  
Using the generalized \emph{Gray code}, all the $c^n$ colorings can be arranged in an order $\delta_1, \delta_2,\ldots $ such that two consecutive colorings differ at only one node, and thus the maximum flow corresponding to $G[\delta_{i+1}]$ can be obtained from that corresponding to $G[\delta_{i}]$ by performing at most two breadth-first-searches on the residual graph. 
The next theorem states the result but the detailed proof is omitted here.
\begin{theorem}
There exists an $O((m+n)c^{n})$ time algorithm for the Max CDP problem on $c$-colors graphs.
\end{theorem}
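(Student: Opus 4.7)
The plan is to establish the theorem in two halves: a correctness reduction and a Gray-code speedup of the underlying flow computations. For correctness, given any optimal collection $\mathcal{P}$ of uni-color disjoint $st$-paths, let $\delta^{*}$ be the induced coloring in which $\delta^{*}(v)$ equals the color of the unique path of $\mathcal{P}$ through $v$ (arbitrary when $v$ lies on no path). Every edge used by $\mathcal{P}$ then belongs to $E[\delta^{*}]$, while conversely any node-disjoint $st$-path in $G[\delta^{*}]$ is monochromatic by the very definition of $E_{i}[\delta^{*}]$. Hence the Max CDP optimum equals $\max_{\delta}\kappa_{G[\delta]}(s,t)$ taken over all node colorings, and enumerating the $c^{n-2}$ colorings while maintaining the running maximum of $\kappa_{G[\delta]}(s,t)$ returns the optimum.

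To beat the naive $O(mnc^{n})$ bound I would traverse these colorings along a reflected $c$-ary Gray code of length $n-2$, so that consecutive colorings $\delta_{i}$ and $\delta_{i+1}$ differ at exactly one non-terminal node $v$ whose color switches from $a$ to $b$. Only edges incident to $v$ change between $G[\delta_{i}]$ and $G[\delta_{i+1}]$: each $(v,u)\in E_{a}$ with $\delta_{i}(u)=a$ (or $u\in\{s,t\}$) is deleted, and each $(v,u)\in E_{b}$ with $\delta_{i+1}(u)=b$ (or $u\in\{s,t\}$) is inserted. I would maintain the standard node-capacitated auxiliary digraph $H_{i}$ together with an integral max flow $F_{i}$ of value $|\mathcal{Q}_{i}|=\kappa_{G[\delta_{i}]}(s,t)$; both edits are then local to $v$ in $H_{i+1}$.

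The pivotal lemma to prove is $||\mathcal{Q}_{i+1}|-|\mathcal{Q}_{i}||\leq 1$: since $v$ has unit node-capacity it is traversed by at most one unit of integral flow, so deleting any set of edges incident to $v$ cannot decrease the max flow by more than one, and symmetrically inserting only edges incident to $v$ cannot increase it by more than one. With this bound in hand the update becomes cheap: first discard from $F_{i}$ the at most one path through $v$ that uses a deleted edge, leaving a feasible flow of value $\geq|\mathcal{Q}_{i}|-1$ in $H_{i+1}$; then perform at most two Ford--Fulkerson augmentations on the residual graph, each a single BFS in $O(m+n)$ time. Summing over the $c^{n-2}$ Gray-code transitions and adding the initial $O(mn)$ flow computation for $\delta_{1}$ yields the claimed $O((m+n)c^{n})$ total. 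The main obstacle is precisely this incremental-flow accounting: I must justify that after each one-node recoloring the residual graph reaches optimality within two BFS rounds, which is exactly what the unit-capacity bound on $v$ guarantees and which rules out any cascading reroutings remote from $v$.
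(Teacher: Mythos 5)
Your proposal is correct and follows essentially the same route as the paper: the coloring $\delta^*$ reduces Max CDP to $\max_\delta \kappa_{G[\delta]}(s,t)$, and a $c$-ary Gray-code traversal combined with the unit node-capacity argument (so a one-node recoloring changes the max flow by at most one, allowing the flow to be repaired with a constant number of BFS augmentations) gives the $O((m+n)c^n)$ bound. The paper omits the detailed incremental-flow justification; your pivotal lemma $\bigl||\mathcal{Q}_{i+1}|-|\mathcal{Q}_i|\bigr|\leq 1$ is exactly the missing accounting step.
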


\section{Length-bounded cases}

In this section we discuss the Max CDP problem with bounded length. The length of a path is the number of edges in this path. 
When the path lengths are required to be upper bounded by a fixed integer $l$, we name the problem by $l$-LCDP. 
An edge $(u,v)\in E_i$ will be denoted by $(u,v;i)$, and $(v_1,v_2,\ldots, v_m;i)$ denote a path of color $i$ and visiting $v_1,v_2,\ldots, v_m$ in this order.
The cases of $l\leq 2$ can be easily solved, and we shall discuss the cases of $l=3$ and 4.

\subsection{A polynomial time algorithm for 3-LCDP}
 
The set of all common neighbors of nodes $s$ and $t$ of color $i$ is denoted by $N_{st}^i$.
Recall that we have assumed $(s,t)\notin E_i$ for all $i$, and we need only consider paths of length at least 2.
An $st$-path of length two has the form $(s,v,t;i)$, i.e., any co-neighbor of $s$ and $t$ may contribute a path.
The next claim comes from that any $st$-path of length two is disjoint to any others of length 2 and may intersect at most one $st$-path of longer length.

\begin{claim}
If $v\in N_{st}^i$ for any $i$, there is an optimal solution of the 3-LCDP problem containing the path $(s,v,t;i)$.  
\end{claim}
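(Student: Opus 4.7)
The plan is a direct exchange argument on any optimal solution. Fix an optimal solution $\mathcal{P}^*$ of the 3-LCDP instance, let $v\in N_{st}^i$, and set $Q=(s,v,t;i)$. The key structural observation I want to exploit is the one already highlighted in the text: since $Q$ has length two, its \emph{only} internal node is $v$, so $Q$ can intersect any other $st$-path in at most the single node $v$. I will show that $\mathcal{P}^*$ can always be transformed, without losing any path, into an optimal solution that contains $Q$.

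First I would dispose of the easy case: if $v$ is not used as an internal node of any path in $\mathcal{P}^*$, then $Q$ is internally disjoint from every path of $\mathcal{P}^*$, and $\mathcal{P}^*\cup\{Q\}$ is a larger feasible solution (every path still has length $\leq 3$ and is uni-color), contradicting the optimality of $\mathcal{P}^*$. Hence in an optimal solution $v$ must appear as an internal node of some path; by internal disjointness of $\mathcal{P}^*$ this path is unique — call it $P$.

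Next I would perform the swap $\mathcal{P}'=(\mathcal{P}^*\setminus\{P\})\cup\{Q\}$. The path $Q$ is uni-color of length $2\leq 3$, so it meets the color and length constraints. Its only internal node is $v$, and $v$ was already an internal node of the removed path $P$, hence of no other path in $\mathcal{P}^*$. Therefore $Q$ is internally disjoint from every path in $\mathcal{P}^*\setminus\{P\}$, so $\mathcal{P}'$ is feasible, has the same cardinality as $\mathcal{P}^*$, and is consequently also optimal. This $\mathcal{P}'$ contains $(s,v,t;i)$, as required.

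There is essentially no hard step here; the whole argument rides on the observation that a length-2 path has a single internal node, so replacing the (at most one) path through $v$ by $Q$ can never create a conflict with any other path of $\mathcal{P}^*$. The one subtlety worth flagging is the tacit use of optimality to rule out the case ``$v$ unused'': without it, the claim would be about feasibility rather than optimality, and the swap step alone would not suffice.
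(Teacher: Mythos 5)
Your exchange argument is correct and is exactly the justification the paper gives (the paper compresses it into the single remark that a length-2 path has only $v$ as its internal node and hence meets at most one path of any optimal solution, so it can be swapped in or added). No gaps; your write-up just spells out the two cases the paper leaves implicit.
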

\vspace{-20pt}
\begin{algorithm}[H]
\caption{}\label{alg:l3}
Input: A $c$-colors graph $G$ and two nodes $s$ and $t$.\\
Output: The maximum number of disjoint uni-color $st$-paths of length at most 3.
\begin{algorithmic}[1]
\State $S\gets\emptyset$; \Comment{solution set}
\For{$k\gets 1$ to $c$}
\State for each node $v\in N_{st}^i$, add path $(s,v,t;i)$ into $S$ and remove $v$ from $G$;
\EndFor
\State $F_i\gets \{<u,v>|\{(s,u),(u,v),(v,t)\}\subseteq E_i\}$ for $1\leq i\leq c$;  \Comment{ordered pairs}
\State $F\gets \bigcup_i F_i$ and construct the directed graph $H$ induced by $F$;
\State find a maximum match $M$ of $H$;
\ForAll{$<u,v>\in M$}
\State add $(s,u,v,t;i)$ into $S$ for some $i$ such that $(u,v)\in F_i$;\label{step:path}
\EndFor 
\State\Return $S$.
\end{algorithmic}
\end{algorithm} 
\vspace{-10pt}
Algorithm 1 is the proposed method for solving the 3-LCDP problem exactly.
Besides the above claim, the correctness of the algorithm is due to the next claim which can be shown by observing that a set of disjoint $st$-paths corresponds to a matching on $H$, and vice versa.  
We remind that defining $F$ as a set of ordered pairs is only for the sake of making step~\ref{step:path} easier. The maximum matching on a directed graph is the same as the one on an undirected graph. 

\begin{claim}
Suppose that $\bigcup_i N_{st}^i=\emptyset$. A maximum matching $M$ of the graph $H$ constructed in Algorithm~\ref{alg:l3} corresponds to an optimal solution of the 3-LCDP problem. 
\end{claim}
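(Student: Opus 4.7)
My plan is to establish a correspondence between sets of internally-disjoint uni-color $st$-paths of length at most $3$ in $G$ and matchings in (the underlying undirected version of) $H$, from which the claim follows immediately since the maxima on both sides then agree.

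First, I would use the standing assumptions to restrict the shape of feasible paths. Since we already assumed $(s,t)\notin E_i$ for every $i$, no length-$1$ path exists; and since $\bigcup_i N_{st}^i=\emptyset$, there is no co-neighbor of $s$ and $t$ of any color, so no length-$2$ path exists either. Consequently every feasible path has length exactly $3$ and has the form $(s,u,v,t;i)$ with $(s,u),(u,v),(v,t)\in E_i$; equivalently it corresponds to an ordered pair $\langle u,v\rangle\in F_i\subseteq F$, hence to an edge of $H$ on the internal vertex set $V\setminus\{s,t\}$.

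Next, I would verify the two directions of the correspondence. In one direction, given a collection $\mathcal{P}$ of pairwise internally-disjoint feasible paths, send each $(s,u,v,t;i)\in\mathcal{P}$ to the $H$-edge $\langle u,v\rangle$; internal disjointness of two paths $(s,u_1,v_1,t;i_1)$ and $(s,u_2,v_2,t;i_2)$ is exactly the condition $\{u_1,v_1\}\cap\{u_2,v_2\}=\emptyset$, which is exactly what it means for the two corresponding edges of $H$ to share no endpoint, so the image is a matching of size $|\mathcal{P}|$. In the opposite direction, given a matching $M$ of $H$, each $\langle u,v\rangle\in M$ lies in some $F_i$ and thus yields the color-$i$ path $(s,u,v,t;i)$ produced in step~\ref{step:path}; vertex-disjointness of the matched edges translates back to internal disjointness of the resulting paths. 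Hence the maximum matching number of $H$ equals the maximum number of internally-disjoint uni-color length-$3$ $st$-paths, which under our hypotheses is the optimum of the 3-LCDP instance.

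The only point to handle with care — and not really an obstacle — is that $F$ is defined as a set of ordered pairs whereas the matching $M$ is computed with respect to the underlying undirected edge set of $H$; the orientation is needed solely to reconstruct the correct endpoint ordering of the path in step~\ref{step:path}, and it does not affect the disjointness argument because vertex-incidence of edges in $H$ does not depend on direction.
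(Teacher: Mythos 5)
Your proposal is correct and follows exactly the route the paper intends: the paper itself only remarks that the claim "can be shown by observing that a set of disjoint $st$-paths corresponds to a matching on $H$, and vice versa," and your argument is a careful expansion of precisely that bijection, including the reduction to length-exactly-$3$ paths under the two standing hypotheses and the harmless ordered-versus-unordered issue for edges of $H$.
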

The time complexity is dominated by the step of finding a maximum cardinality matching of a general graph, which can be done in $O(\sqrt{n}m)$ time \cite{match}.

\begin{theorem}
The 3-LCDP problem on color graphs can be exactly solved in $O(\sqrt{n}m)$ time.
\end{theorem}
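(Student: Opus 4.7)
The plan is to derive the theorem from the correctness of Algorithm~\ref{alg:l3} together with a standard matching runtime, which amounts to establishing the two claims that punctuate the algorithm. After discarding the $(s,t)$ edges by the preprocessing assumption, every usable $st$-path has length $2$ or $3$; Claim~1 disposes of the length-$2$ ones greedily, and Claim~2 reduces the residual length-$3$ problem to graph matching.

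For Claim~1 I would use an exchange argument. Fix an optimal solution $\mathcal{P}^*$ and any $v \in N_{st}^i$. Because a length-$2$ $st$-path has only one internal vertex, at most one path in $\mathcal{P}^*$ can use $v$ internally. If no path uses $v$, then $\mathcal{P}^*\cup\{(s,v,t;i)\}$ is still feasible and strictly larger, contradicting optimality. If a single longer path uses $v$, replacing it by $(s,v,t;i)$ preserves cardinality and disjointness, since the other internal vertices of the discarded path simply become unused. Iterating shows that some optimum contains $(s,v,t;i)$ for every $v \in N_{st}^i$ and every color $i$, which justifies the greedy loop and the deletion of these internal nodes.

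For Claim~2, after the deletion, every surviving $st$-path of interest has length exactly $3$ and is of the form $(s,u,v,t;i)$ with $(s,u),(u,v),(v,t)\in E_i$, hence is encoded by the ordered pair $\langle u,v\rangle\in F_i$. Internal disjointness of two such paths $(s,u,v,t;i)$ and $(s,u',v',t;j)$ is exactly $\{u,v\}\cap\{u',v'\}=\emptyset$, which is the matching condition on the underlying undirected graph of $H$. Thus matchings of $H$ are in cardinality-preserving bijection with sets of internally disjoint length-$3$ uni-color $st$-paths: step~\ref{step:path} recovers an actual path and color from each matched arc, and conversely any path collection projects to a matching by forgetting colors and orientations.

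For the complexity, computing the sets $N_{st}^i$ and $F_i$ by scanning edges incident to $s$ and $t$ and then checking adjacency takes $O(m)$ total time, and $|V(H)|\le n$, $|E(H)|\le m$. The bottleneck is maximum matching on the general graph $H$, solvable in $O(\sqrt{n}\,m)$ by Micali--Vazirani~\cite{match}, giving the stated bound. The main obstacle I anticipate is pinning down Claim~1's exchange carefully—specifically the observation that at most one path of $\mathcal{P}^*$ can use $v$ internally, which holds trivially for length-$2$ paths through $v$ but uses disjointness for longer ones; once that is in place, Claim~2 and the runtime accounting are essentially routine.
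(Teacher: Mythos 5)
Your proposal is correct and follows essentially the same route as the paper: greedily commit to all length-2 paths via the exchange argument behind Claim 1, reduce the remaining length-3 paths to maximum matching on $H$ as in Claim 2, and charge the $O(\sqrt{n}m)$ bound to the Micali--Vazirani matching step. The paper states both claims with only brief justification, so your more explicit exchange argument and bijection are just a fleshed-out version of the intended proof.
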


\subsection{The complexity of 4-LCDP and an approximation algorithm}

For the length-bounded case, the notations $\kappa^l_i(s,t)$ and $\kappa^l(s,t)$ are analogous to the ones without superscript but those paths are of length at most $l$. 

\begin{theorem}
The $l$-LCDP problem on $c$-colors graphs is NP-hard for fixed $l\geq 4$ and $c\geq 2$.
\end{theorem}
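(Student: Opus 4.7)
The plan is to reduce a bounded-occurrence variant of 3-SAT to the $l$-LCDP problem on 2-colors graphs; the extension to $c\geq 3$ then follows trivially by adding empty color classes. I would first restrict attention to 3-SAT formulas in which every literal appears in at most $l-2$ clauses. This restriction remains NP-hard via the standard variable-duplication trick: replace any high-occurrence variable by several fresh copies linked through equivalence constraints padded to 3-CNF clauses.

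Given such a formula $\phi$ with $r$ variables and $q$ clauses, I would build a 2-colors graph with source $s$ and sink $t$ out of local gadgets of bounded diameter, in the spirit of the MCDP$_2$ reduction of Section~2.1 but without its global stage structure. For each variable $x_i$ I introduce a switch node $u_i$ together with literal-copy nodes $T_i^{j}$ and $F_i^{j}$, one for each clause $C_j$ containing the corresponding literal. The color-$1$ edges lay the $T_i^{j}$'s along a chain from $u_i$ to $t$ and the $F_i^{j}$'s along a parallel chain, with no cross-edges. The only length-$\leq l$ color-$1$ $st$-paths through the gadget are then the two chain paths, one ``blocking all $T_i^{j}$'s'' (interpreted as $x_i=F$) and one ``blocking all $F_i^{j}$'s'' ($x_i=T$); since $u_i$ is shared by both, at most one per variable enters any solution.

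For each clause $C_j$ I add a private node $v_j$ and, for each literal $y_{j,k}$ of $C_j$, a color-$2$ path $s\to z_{j,k}\to v_j\to t$ of length $3$, where $z_{j,k}$ is the literal-copy node $T_i^{j}$ or $F_i^{j}$ belonging to $y_{j,k}$. Because $v_j$ is shared among the three literal paths of $C_j$, at most one of them can be chosen per clause, and such a path is available only if $z_{j,k}$ is not occupied by the variable gadget of $x_i$, that is, only if the literal $y_{j,k}$ is true under the encoded assignment. A counting argument then shows that the graph admits $r+q$ pairwise disjoint uni-color $st$-paths of length at most $l$ if and only if $\phi$ is satisfiable: the forward direction follows by taking one chain per variable and one witness literal per clause, while the reverse direction reads off an assignment from the selected variable paths and a true literal in each clause from the selected clause paths.

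The main obstacle is certifying that the variable gadget behaves as a genuine binary switch: one must verify that no ``mixed'' length-$\leq l$ color-$1$ $st$-path exists that would free an inconsistent subset of the literal-copies, and that, even allowing the solver to drop a variable path in exchange for more clause paths, the total cannot reach $r+q$ unless $\phi$ is satisfiable. The chain layout with no inter-sign edges is designed to make any length-$\leq l$ color-$1$ path commit to a single chain, and the cost-benefit trade-off between skipping a variable path and unblocking extra clause paths must be checked in detail. Minor corner cases, such as variables with only positive or only negative occurrences, are handled by attaching a dummy node on the empty side to preserve the two-choice structure.
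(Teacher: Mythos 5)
Your proposal is correct and is essentially the paper's own reduction: bounded-occurrence SAT, one color carrying a variable-commitment path per variable that occupies the occurrence-copies of the literals set to False, the other color carrying one witness path per clause through a free literal-copy, with target $r+q$. The ``trade-off'' you flag as needing verification is already settled by your own counting (every color-$1$ path uses some $u_i$ and every color-$2$ path uses some $v_j$, so the totals are capped at $r$ and $q$ separately, exactly as the paper caps them via the degrees of $s$ and $t$); the only cosmetic differences are the swapped color roles and your private clause nodes $v_j$.
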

\begin{proof}
It is sufficient to show the case of $l=4$ and $c=2$.
We show the NP-hardness by transforming from a restrict version of the SAT problem in which there are at most 3 occurrences of each variable. This version of SAT problem still remains NP-complete \cite{npc,tov84}. 
Let $C_i$, $1\leq i\leq q$, be the clauses and $x_j$, $1\leq j\leq r$, the variables. 
For any variable $x_i$, if all the occurrences of $x_i$ are positive, we can assign $x_i$ True and remove $x_i$ from all clauses. The case of all occurrences are negative is similar. Therefore we can assume the occurrences of each variable are neither all positive nor all negative. As a result, both $x_i$ and $\bar{x}_i$ occur at most twice for any $1\leq i\leq r$.  
Given an instance of the restrict SAT problem, we construct a 2-colors graph as in Figure~\ref{npcf2}.
Since the number of occurrences of each literal is at most two, any $st$-path of any color has length at most 4.
\begin{figure}[t]
\begin{center}
\includegraphics[scale=0.85]{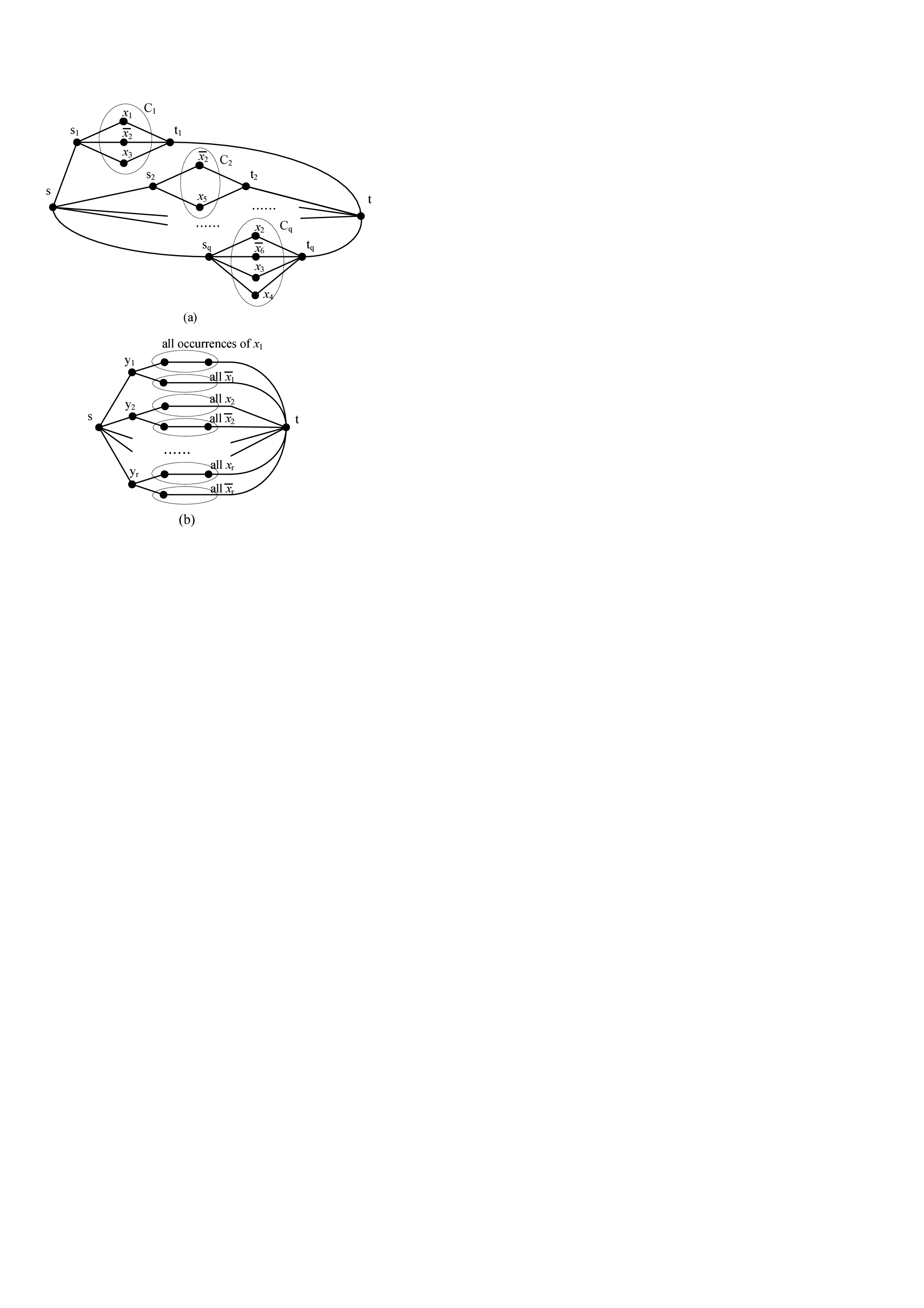}
\caption{Transformation from SAT problem to $4$-LCDP problem: (a). edges of color 1; and
(b). edges of color 2.}
\label{npcf2}
\end{center}
\vspace{-20pt}
\end{figure}
Since both the degree of $s$ and $t$ are $q$, $\kappa_1^4(s,t)\leq q$, and the maximum is achieved if for any clause there is a literal not used in $G_2$. On the other hand, 
$\kappa_2^4(s,t)\leq r$ since the degree of $s$ is $r$. We can also easily find $r$ disjoint $st$-paths in $G_2$ as long as for each $i$ we use either $x_i$ or $\bar{x}_i$ as the internal nodes.
If the SAT instance is satisfiable, let $T$ be a truth assignment satisfying all the clauses.
We choose $x_i$ as internal nodes in $G_2$ if $x_i$ is assigned False in $T$; and $\bar{x}_i$ otherwise.
Then we can have $q$ disjoint $st$-paths of color 1 since there exists a literal assigned True in each clause and thus not used in color 2. The total number of disjoint paths is $q+r$.
Conversely if there are $q+r$ disjoint $st$-paths, there are exactly $q$ paths in $G_1$ and $r$ paths in $G_2$. Therefore for each variable either itself or its negation is used in $G_2$. Since there are $q$ disjoint paths in $G_1$, each clause contains at least one literal not used in $G_2$. So we can assign $x_i$ True if it is not used in $G_2$ and False otherwise, and all the clauses are satisfied. 
\qed\end{proof}

The $c$-approximation algorithm in Section~\ref{sec:app} also works for length bounded case.
We may achieve a better approximation ratio for small $l$.

\begin{theorem}\label{thm:l_app}
For any fixed $\varepsilon>0$ and $l>3$, the $l$-LCDP problem can be approximated with ratio $(l-1)/2+\varepsilon$ in polynomial time.
\end{theorem}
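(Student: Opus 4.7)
The plan is to reduce the $l$-LCDP problem to $(l{-}1)$-set packing and then apply the local-search approximation algorithm of Hurkens and Schrijver. A uni-color $st$-path of length at most $l$ has at most $l-1$ internal nodes, and two such paths are internally disjoint iff their internal-node sets are disjoint. Hence I would let $\mathcal{F}$ be the collection of all uni-color $st$-paths of length at most $l$, and for each $P\in\mathcal{F}$ take $I(P)\subseteq V\setminus\{s,t\}$ to be the set of internal nodes of $P$. Then $|I(P)|\leq l-1$, and a feasible solution of $l$-LCDP corresponds bijectively to a packing in the set system $\{I(P):P\in\mathcal{F}\}$.

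Since $l$ is a fixed constant, this set system can be built in polynomial time: there are at most $c\,n^{l-1}$ ordered tuples consisting of a color together with up to $l-1$ candidate internal nodes, and each such tuple can be tested in $O(l)$ time to see whether it forms a valid uni-color $st$-path. So the reduction is polynomial.

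Next I would invoke the Hurkens--Schrijver theorem on local search for $k$-set packing: for every $\varepsilon>0$ and integer $k\geq 2$ there is a constant $t=t(\varepsilon,k)$ such that any $t$-local optimum (no swap removing at most $t$ current sets and inserting $t+1$ disjoint new sets is improving) achieves approximation ratio $k/2+\varepsilon$. Instantiating with $k=l-1$ yields the desired ratio $(l-1)/2+\varepsilon$. The local-search loop terminates in polynomial time because the current packing has size at most $n$, each improving swap strictly increases its cardinality, and at each iteration there are only $O(|\mathcal{F}|^{\,t+1})=O(n^{(l-1)(t+1)})$ candidate swaps to examine; this is polynomial for fixed $l$ and $\varepsilon$.

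The main point to check carefully is that the Hurkens--Schrijver bound applies to set systems whose members have size \emph{at most} $k$, not only to $k$-uniform systems; this is indeed the form in which the statement is proved (padding short sets with distinct dummy elements reduces the general case to the uniform one without changing the packing structure), so the ratio transfers directly. Correctness of the reduction itself is immediate from the observation that internal-node disjointness is exactly what set disjointness captures, and the overall running time is dominated by the local-search phase, giving the claimed polynomial-time $((l-1)/2+\varepsilon)$-approximation.
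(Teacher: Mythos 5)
Your proposal is correct and follows essentially the same route as the paper: both reduce $l$-LCDP to set packing over the internal-node sets of uni-color $st$-paths (each of size at most $l-1$) and invoke the Hurkens--Schrijver local-search bound with $k=l-1$. Your explicit remark that the bound applies to sets of size \emph{at most} $k$ is a worthwhile detail the paper leaves implicit, but the argument is otherwise the same.
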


To show Theorem~\ref{thm:l_app}, we introduce the following problem, and Algorithm~\ref{alg:inc} is a $(k/2+\varepsilon)$-approximation algorithm shown in \cite{hur89}.
\begin{quote}
{\sc Problem}: The Maximum Set Packing (MSP) problem\\
{\sc Instance}: A collection $\mathcal{T}$ of $k$-element subsets $T_i$, $1\leq i\leq p$, of a universal set $U$ of total $q$ elements.\\
{\sc Goal}: A maximum disjoint sub-collection of $\mathcal{T}$.
\end{quote}
\vspace{-20pt}
\begin{algorithm}[H]
\caption{}\label{alg:inc}
Input: An instance $\mathcal{T}$ of the MSP problem and an integer parameter $s\geq 1$.\\
Output: A disjoint sub-collection of $\mathcal{T}$.
\begin{algorithmic}
\State $S\gets\emptyset$; \Comment{solution set}
\While{$\exists \, i+1$ disjoint subsets intersecting at most $i$ subsets in $S$ for any $i\leq s$}   
\State replace the $i$ subsets in $S$ with the new $i+1$ subsets;
\EndWhile
\State Output $S$.
\end{algorithmic}
\end{algorithm} 
\vspace{-10pt}
Let OPT denote the maximum number of disjoint subsets and APP denote the result obtained by Algorithm \ref{alg:inc}. It was shown in \cite{hur89} that 
\begin{eqnarray}\label{sdr}
\frac{OPT}{APP}\leq \left\{\begin{array}{ll}
\frac{k(k-1)^r-k}{2(k-1)^r-k}&\mbox{if $s$ is even}\\
\frac{k(k-1)^r-2}{2(k-1)^r-2}&\mbox{if $s$ is odd}
\end{array}\right.
\mbox{where }r= \left\{\begin{array}{ll}
s/2+1&\mbox{if $s$ is even}\\
(s+1)/2&\mbox{if $s$ is odd}
\end{array}\right.
\end{eqnarray}

By transforming to the MSP problem, the $l$-CDP problem can be approximated with ratio $(l-1)/2+\varepsilon$ for any $\varepsilon>0$. 
A direct transformation is as follows.
Let $(G,s,t)$ be an instance of the $l$-LCDP problem.
\begin{itemize}
\item For each uni-color $st$-path of length at most $l$, create a subset $T_i$ consisting of the internal nodes of the path. There are at most $O(n^{l-1})$ subsets and $|T_i|\leq l-1$ for each $T_i$.
\item The elements are all the nodes in the graph except $s$ and $t$.
\item Any disjoint sub-collection corresponds to a set of disjoint uni-color paths.
\end{itemize}
The stop condition of the while-loop can be implemented by enumerating all possible $i+1$ subsets, testing if they are disjoint in $O((i+1)^2l)$ time, and counting the intersected subsets in $S$ in $O((i+1)|S|l)$ time. Since $l$ is fixed, $i\leq s$, and $s$ is also a constant determined by $\varepsilon$, this step takes $O(|\mathcal{T}|^{i+1}\times |S|)$. 
Since $|S|$ is increased at least one after each iteration and bounded by $O(n)$, the naive implementation has time complexity $O(|\mathcal{T}|^{s+1}\times |S|^2)=O(n^{(l-1)(s+1)+2})$, which is polynomial for fixed $l$ and $\varepsilon$. 
Theorem~\ref{thm:l_app} follows from Eq.~(\ref{sdr}), the transformation and the above analysis of the time complexity.

\subsection{An efficient 2-approximation algorithm for 4-LDCP}

Particularly, when $l=4$ and $s=1$, by substituting $k=l-1=3$, the approximation ratio by Eq. (\ref{sdr}) is $(3\times 2-2)/(2^2-2)=2$.
That is, it takes $O(n^8)$ time to compute a 2-approximation of the 4-LDCP problem.
Although in polynomial time, it becomes intractable even for graphs of moderate size.
In the following, we aim at developing a more efficient algorithm for $s=1$ and $l=4$.
Let $S=\{T_i|1\leq i\leq |S|\}$ denote the solution found so far, in which $T_i$ is the set of internal nodes of an $st$-path. Let $V_0=V-\bigcup_i T_i$ be the nodes not used yet.
When $s=1$, the while-condition can be implemented by 
\begin{quote}
	For each $T_i\in S$, determine if there are two disjoint $st$-paths of length at most 4 in $G[V_0\cup T_i]$. 
\end{quote}

The key point is how to determine if $\kappa^4(s,t)\geq 2$ in a color graph without generating all possible paths.
We shall use the following notations. The distance, or shortest path length, between $s$ and $t$ in graph $G_i$ is denoted by $d_i(s,t)$. A node $v$ is an \emph{$st$-cut node} in graph $G_i$ if its removal separates the two nodes, i.e., $\kappa_i(s,t)=0$ after removing $v$. The set of all such cut nodes is denoted by $X_i$.

\begin{algorithm}[H]
\caption{}\label{alg:2path}
Input: A color graph $G$ and two nodes $s$ and $t$.\\
Output: Return True iff there are two disjoint paths of length at most four.
\begin{algorithmic}[1]
\For{each color $i$}
\State remove any node $v$ in $G_i$ such that $d_i(s,v)+d_i(v,t)>4$.\label{st:redu}
\EndFor
\If{$\kappa_i^4(s,t)=2$ for some $i$}
\State\Return True;
\EndIf
\ForAll{$i$ and $j$ such that $\kappa_i^4(s,t)=\kappa_j^4(s,t)=1$}
\If{Test$(i,j)=$True}
\State\Return True;
\EndIf
\EndFor 
\State\Return False.
\end{algorithmic}
\end{algorithm} 
\vspace{-20pt}
\begin{algorithmic}[1]
\Procedure{Test}{$i$,$j$} 
\Comment{testing if there are two disjoint paths in $G_i$ and $G_j$, resp, of length at most four. It is ensured that $X_q(s,t)\neq \emptyset$, for $q=i,j$ and $d_q(s,v)+d_q(v,t)\leq 4$ for any node $v$ in $G_i$ or $G_j$.}
\Repeat
\State $G_i\gets G_i-X_j$; $G_j\gets G_j-X_i$;
\Until{both $G_i$ and $G_j$ are unchanged or $d_q(s,t)>4$ for $q=i$ or $j$;}  

\If{$d_i(s,t)>4$ or $d_j(s,t)>4$ or $X_i\cap X_j\neq \emptyset$}
\State\Return False;\label{step:bigd}
\ElsIf{$d_i(s,t)\leq 3$ or $d_j(s,t)\leq 3$}\label{step:d3}
\State\Return True;
\EndIf
\If{more than two $st$-paths of length 4 in $G_i$ or in $G_j$}
\State\Return True;
\Else\Comment{at most two length-4 paths in $G_i$ and in $G_j$.}
\State determine and return the result by a brute force method;
\EndIf
\EndProcedure
\end{algorithmic}
\begin{lemma}\label{test2}
Algorithm \ref{alg:2path} is correct and takes $O(c^2n+cm)$ time.
\end{lemma}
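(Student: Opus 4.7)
The plan is to separately establish correctness and the time bound. For correctness, each of the algorithm's decisions must be justified. The pruning in step~\ref{st:redu} is safe because a node $v$ with $d_i(s,v) + d_i(v,t) > 4$ cannot lie on any $st$-path of length at most $4$ in $G_i$. If some single color achieves $\kappa_i^4(s,t) \geq 2$, two disjoint uni-color paths already exist. Otherwise, a feasible pair must combine one path from $G_i$ and one from $G_j$ for colors with $\kappa_i^4(s,t) = \kappa_j^4(s,t) = 1$, which justifies restricting the outer loop to such color pairs.

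For the Test subroutine, the key invariant of the reduction loop is the following safety property: any disjoint pair $(P_i, P_j)$ of length-$\leq 4$ paths has $P_i$ avoiding every node of $X_j$, because $P_j$ is forced to pass through each cut node in $X_j$. Iteratively stripping $X_j$ from $G_i$ and $X_i$ from $G_j$ therefore preserves every feasible solution. Upon termination, the conditions $d_q(s,t) > 4$ and $X_i \cap X_j \neq \emptyset$ correctly imply infeasibility: in the second case, any common cut node would have to appear on both paths.

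The main technical obstacle is verifying the True return at line~\ref{step:d3}, which claims that $d_i(s,t) \leq 3$ (or symmetrically $d_j(s,t) \leq 3$) already implies two disjoint length-bounded paths at that point. I would handle this with a case analysis on $d_i \in \{2, 3\}$: a shortest path in the current $G_i$ has at most two internal nodes, none of which lies in $X_j$ by the loop's terminal invariant. Coupled with the structural bound $d_j(s,u) + d_j(u,t) \leq 4$ for every surviving node $u$ of the reduced $G_j$, one shows that $G_j$ still admits a length-$\leq 4$ $st$-path avoiding those internal nodes. The remaining branch, in which both distances equal $4$, is settled either by observing that three or more disjoint length-$4$ $st$-paths in a single color already force disjointness, or by brute enumeration over the bounded set of length-$4$ candidates.

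For the time complexity, two BFS traversals per color yield all distances $d_i(s,\cdot)$ and $d_i(\cdot,t)$ and execute step~\ref{st:redu} in total time $O(cn + m)$. Deciding $\kappa_i^4(s,t) \geq 2$ reduces to a bounded-depth layered-graph max-flow computation, costing $O(m_i + n)$ per color. Cut node sets $X_i$ are extracted from the block-cut tree in $O(m_i + n)$, and an amortized argument using the fact that deleted nodes are never reinserted bounds the total recomputation cost within each Test call. Summing these contributions over the $O(c^2)$ color-pair iterations yields the claimed $O(c^2 n + cm)$ bound.
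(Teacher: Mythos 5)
Your overall skeleton matches the paper's: prune by distance sums, dispose of the single-color case, and reduce the two-color case to the procedure Test, whose repeat-until loop strips the $st$-cut nodes of each color from the other graph; your justification of that loop (every feasible $P_j$ must traverse all of $X_j$, so $P_i$ may be assumed to avoid $X_j$) is exactly the paper's. But the two places where the real work happens are left open or handled incorrectly. First, at line~\ref{step:d3} you write that ``one shows'' $G_j$ still has a short path avoiding the internal nodes of the short $G_i$-path; the paper's actual argument is that this path has at most two internal nodes, one of which is a node $x\in X_i$ that has already been deleted from $G_j$, and the other of which lies in $G_i$ and hence is not in $X_j$, so its deletion does not separate $s$ from $t$ in $G_j$. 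Your appeal to $d_j(s,u)+d_j(u,t)\le 4$ for every surviving node $u$ does not by itself produce a length-$\le 4$ path in $G_j$ after the deletion, so this step remains unproven in your write-up.

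Second, the case $d_i(s,t)=d_j(s,t)=4$ is where your proposal genuinely fails. The claim that ``three or more disjoint length-$4$ paths in a single color already force disjointness'' is vacuous: three disjoint length-$4$ paths in $G_j$ would give $\kappa_j^4(s,t)\ge 2$, and the algorithm would have returned True before ever calling Test. And ``brute enumeration over the bounded set of length-$4$ candidates'' presupposes exactly the bound you never establish --- a priori $G_j$ may contain $\Theta(n^3)$ length-$4$ $st$-paths, and enumerating pairs of them destroys the claimed $O(c^2n+cm)$ running time. The paper's key ingredient here is a counting argument: a length-$4$ path of $G_i$ has exactly three internal nodes, of which one (a member of $X_i$) is already absent from $G_j$ and the other two are not $st$-cut nodes of $G_j$; hence its removal destroys at most two of $G_j$'s length-$4$ paths, so if either graph has more than two such paths the answer is True, and otherwise each graph has at most two candidate paths and the brute-force check reduces to a constant number of path-pair tests. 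Without this reduction both your correctness argument and your linear time bound for this branch are incomplete.
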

\begin{proof}
The algorithm returns True iff $\kappa_i^4(s,t)\geq 2$ for some color $i$ or there are two uni-color disjoint paths of two colors. Clearly, what we need to show is the correctness of the procedure Test.

By the assumption that $(s,t)\notin E_i$ for all $i$, we need not consider the case that $d_i(s,t)=1$ or $d_j(s,t)=1$. The test procedure starts with a repeat-until loop to remove any $st$-cut node of one graph from the other. 
Note that the loop is necessary since removing nodes from a graph may result in new cut nodes. But the loop will only be executed at most four times since each graph has one $st$-cut node originally and can have at most three $st$-cut nodes or otherwise $s$ and $t$ will have distance more than 4 (including $\infty$, i.e., disconnected).

Step \ref{step:bigd} deals with the case that the distance between $s$ and $t$ in either graph exceeds 4 or there exists any common $st$-cut node. 
At the beginning of step~\ref{step:d3}, we have that $X_i\cap X_j= \emptyset$ and the distance between $s$ and $t$ at either graph is at least two. 
Let $x\in X_i$.
If $d_i(s,t)=2$, there exists a (unique) $st$-path $(s,x,t)$ of color $i$. Immediately the output should be True since $d_j(s,t)\leq 4$ and $x$ is not in $G_j$. The case that $d_j(s,t)=2$ is similar.
If $d_i(s,t)=3$, there is a path $(s,y,x,t)$ or $(s,x,y,t)$ in $G_i$. Since $x$ is not in $G_j$ and $y\notin X_j$, recalling that we have removed any $st$-cut node of $G_j$ from $G_i$, the result should also be True. 

The remaining case is $d_i(s,t)=d_j(s,t)=4$. Recall that each graph has at least one $st$-cut node. Any length-4 $st$-path in $G_i$ contains exactly three internal nodes, said  $\{x,y_1,y_2\}$, in which $x\in X_i$ and therefore not in $G_j$. Furthermore neither $y_1$ nor $y_2$ is in $X_j$. Hence, removing the three nodes destroys at most two paths in $G_j$. If there are more than two, not disjoint surely, length-4 $st$-paths in $G_j$, the output  should be True. Similarly it holds if there are more than two such paths in $G_i$. 
The remaining case is that there are one or two paths in either graph, and the answer can be obtained by the following method.
First we choose a path in $G_i$ and check if the removal of the internal nodes separates $s$ and $t$ in $G_j$. If not, we find two disjoint paths. Otherwise we choose the other path in $G_i$ if any, and do it again. 

By the above discussion, the test procedure takes linear time, i.e., $O(|V|+|E_i|+|E_j|)$.
The whole algorithm calls the test procedure for each pair of $i$ and $j$, and therefore the total time complexity is $O(c^2|V|+2c\sum_i |E_i|)=O(c^2n+cm)$ since the other steps of Algorithm~\ref{alg:2path} can be done in $O(cn+m)$ time.
\qed\end{proof}

Combining Algorithms~\ref{alg:inc} and \ref{alg:2path}, we obtain the next theorem. The time complexity is obtained as follows. To implement the while-condition of Algorithms~\ref{alg:inc}, we need to call Algorithm~\ref{alg:2path} at most $|S|$ times, where $|S|$ is the number of paths found so far. Let $p$ be the number of paths found by the algorithm. Since the while-loop may be executed at most $p$ times, the total time complexity is $O(p^2(c^2n+cm))$.
\begin{theorem}
There exists an $O(p^2(c^2n+cm))$ time 2-approximation algorithm for the 4-LCDP problem on a $c$-colors graph, in which $p$ is the number of paths found by the algorithm.
\end{theorem}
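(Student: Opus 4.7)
The plan is to instantiate Algorithm~\ref{alg:inc} on the set packing formulation of 4-LCDP and to use Algorithm~\ref{alg:2path} to realize its while-condition. Since every length-$\leq 4$ $st$-path has at most $3$ internal nodes, the induced MSP instance has $k=3$. I would run Algorithm~\ref{alg:inc} with $s=1$; substituting $s=1$, $k=3$, $r=1$ into the odd branch of Eq.~(\ref{sdr}) gives the ratio $(3\cdot 2-2)/(2^2-2)=2$, which is exactly the desired approximation guarantee.

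Next I would show how the while-condition can be implemented by repeatedly calling Algorithm~\ref{alg:2path}. With $s=1$, the only swap allowed is to replace a single $T_i\in S$ by two new disjoint subsets whose elements avoid every other $T_j\in S$. Equivalently, I would iterate over each $T_i\in S$ and ask whether $G[V_0\cup T_i]$ contains two disjoint uni-color $st$-paths of length at most 4, which is exactly the decision problem solved by Algorithm~\ref{alg:2path}. By Lemma~\ref{test2}, each such call costs $O(c^2n+cm)$, so a single evaluation of the while-condition costs $O(|S|(c^2n+cm))$.

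For the outer loop, I would observe that each successful swap increases $|S|$ by exactly one, and the algorithm halts once no augmenting swap exists. Hence the loop is executed at most $p$ times, where $p$ is the final number of paths output, and at every iteration $|S|\le p$. Multiplying yields the overall bound $O(p^2(c^2n+cm))$; the initial preprocessing (computing $G_i$, length-bounded reductions) fits within this budget. Correctness of the paths returned follows because each $T_i$ is constructed as the internal node set of a uni-color $st$-path of length at most 4, and disjointness is maintained by the invariant $V_0=V-\bigcup_i T_i$.

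I expect no deep obstacle beyond verifying the two reductions described above; the numerical ratio comes directly from Eq.~(\ref{sdr}), and the main delicate point is simply confirming that ``$s=1$-augmentation for MSP'' is faithfully captured by the per-$T_i$ test in Algorithm~\ref{alg:2path}, which it is, because any two new subsets witnessing an augmentation must jointly intersect exactly one old $T_i$, forcing their internal nodes to lie in $V_0\cup T_i$.
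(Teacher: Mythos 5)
Your proposal is correct and follows essentially the same route as the paper: reduce to Maximum Set Packing with $k=l-1=3$, run Algorithm~\ref{alg:inc} with $s=1$ to get ratio $2$ from Eq.~(\ref{sdr}), and realize the while-condition by calling Algorithm~\ref{alg:2path} on $G[V_0\cup T_i]$ for each $T_i\in S$, yielding $O(p^2(c^2n+cm))$ time. Your closing remark verifying that the per-$T_i$ test faithfully captures the $s=1$ augmentation is a small but welcome addition that the paper leaves implicit.
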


Finally we would like to remark the following. In most of the applications, both $c$ and $p$ are small integers, and thus the approximation algorithm runs in linear time. Furthermore, since we need only consider the graphs induced by $\{v|d_i(s,v)+d_i(v,t)\leq 4\}$ for each color $i$, the algorithm is in fact a local algorithm and is therefore efficient even for large-scale social networks. 

\end{document}